\newtheorem{dummy}{}[section]
\newtheorem{prop}[dummy]{Proposition}
\newtheorem{lemma}[dummy]{Lemma}
\newtheorem{cor}[dummy]{Corollary}
\newtheorem{conj}{Conjecture}
\newtheorem{theorem}{Theorem}
\theoremstyle{definition}
\theoremstyle{remark}
\newcommand{\M}{\ensuremath{\overline{\mathcal{M}}}}
\renewcommand{\d}{\ensuremath{\partial}}
\newcommand{\eps}{\varepsilon}
\newcommand{\<}{\left<}
\renewcommand{\>}{\right>}
\newcommand{\mbZ}{\mathbb{Z}}
\newcommand{\mbC}{\mathbb{C}}
\DeclareMathOperator{\res}{res}
\newcommand{\Coef}{\mathrm{Coef}}
\newcommand{\hL}{\widehat{L}}
\newcommand{\oM}{\overline{\mathcal{M}}}
\newcommand{\mbQ}{\mathbb{Q}}
\newcommand{\mcF}{\mathcal{F}}
\numberwithin{equation}{section}
\begin{document}

\title[Open $r$-spin theory III]{Open $r$-spin theory III: a prediction for higher genus}

\author{Alexandr Buryak}
\address{A. Buryak:\newline Faculty of Mathematics, National Research University Higher School of Economics, 6 Usacheva str., Moscow, 119048, Russian Federation;\smallskip\newline 
Center for Advanced Studies, Skolkovo Institute of Science and Technology, 1 Nobel str., Moscow, 143026, Russian Federation}
\email{aburyak@hse.ru}

\author{Emily Clader}
\address{E.~Clader:\newline San Francisco State University, San Francisco, CA 94132-1722, USA}
\email{eclader@sfsu.edu}

\author{Ran J. Tessler}
\address{R.~J.~Tessler:\newline Incumbent of the Lilian and George Lyttle Career Development Chair, Department of Mathematics, Weizmann Institute of Science, POB 26, Rehovot 7610001, Israel}
\email{ran.tessler@weizmann.ac.il}

\begin{abstract}
In our previous two papers, we constructed an $r$-spin theory in genus zero for Riemann surfaces with boundary and fully determined the corresponding intersection numbers, providing an analogue of Witten's $r$-spin conjecture in genus zero in the open setting. In particular, we proved that the generating series of open $r$-spin intersection numbers is determined by the genus-zero part of a special solution of a certain extension of the Gelfand--Dickey hierarchy, and we conjectured that the whole solution controls the open $r$-spin intersection numbers in all genera, which do not yet have a geometric definition. In this paper, we provide geometric and algebraic evidence for the correctness of this conjecture.
\end{abstract}

\date{\today}

\maketitle

\section{Introduction}

One of the most important results in the study of the intersection theory on the moduli spaces of stable curves $\oM_{g,n}$ is Witten's conjecture \cite{Witten2DGravity}, proved by Kontsevich \cite{Kontsevich}, saying that the generating series of intersection numbers
\begin{gather*}
\mcF^c(t_0,t_1,\ldots,\eps)=\sum_{g\ge 0}\eps^{2g-2}\mcF^c_g(t_0,t_1,\ldots):=\hspace{-0.15cm}\sum_{\substack{g \geq 0,\,n\geq 1\\2g-2+n>0}} \sum_{d_1,\ldots,d_n\geq 0} \frac{\eps^{2g-2}}{n!}\left(\int_{\M_{g,n}}\hspace{-0.15cm}\psi_1^{d_1} \cdots \psi_n^{d_n} \right)t_{d_1} \cdots t_{d_n}
\end{gather*}
is the logarithm of a tau-function of the KdV hierarchy. Here, $\psi_i \in H^2(\M_{g,n},\mbQ)$ is the first Chern class of the cotangent line bundle corresponding to the $i$-th marked point, $t_0,t_1,\ldots$ and~$\eps$ are formal variables, and the superscript ``$c$," which stands for ``closed," is to contrast with the open theory discussed below.  

\medskip

Witten also proposed a much more general conjecture, the so-called $r$-spin Witten conjecture~\cite{Witten93}, which considers the moduli space of stable curves with $r$-spin structure. On a smooth marked curve $(C;z_1, \ldots, z_n)$, an \emph{$r$-spin structure} is a line bundle~$S$ together with an isomorphism $S^{\otimes r} \cong \omega_{C}\left(-\sum_{i=1}^n a_i[z_i]\right)$, where $a_i \in \{0,1,\ldots, r-1\}$ and $\omega_C$ denotes the canonical bundle. There is a natural compactification $\M_{g,(a_1, \ldots, a_n)}^{1/r}$ of the moduli space of smooth curves with $r$-spin structure, and this space admits a virtual fundamental class $c_W\in H^*(\M_{g,(a_1, \ldots, a_n)}^{1/r},\mbQ)$ known as \emph{Witten's class}. In genus zero, $c_W$ is the Euler class of the derived pushforward $(R^1\pi_*\mathcal{S})^{\vee}$, where $\pi\colon\mathcal{C} \rightarrow \M_{0,(a_1, \ldots, a_n)}^{1/r}$ is the universal curve and~$\mathcal{S}$ is the universal $r$-spin structure. In higher genus, the sheaf $R^1\pi_*\mathcal{S}$ may not be a vector bundle, and the definition of Witten's class is much more intricate; see \cite{PV,ChiodoWitten,Moc06,FJR,CLL} for various constructions.

\medskip

Witten's $r$-spin conjecture, proved by Faber--Shadrin--Zvonkine~\cite{FSZ10}, states that if $t^a_d$ are formal variables indexed by $0\le a\le r-1$ and $d\ge 0$, then the generating series
\begin{align*}
\mcF^{\frac{1}{r},c}(t^*_*,\eps)=&\sum_{g\ge 0}\eps^{2g-2}\mcF^{\frac{1}{r},c}_g(t^*_*)\\
:=&\sum_{\substack{g \geq 0,\, n \geq 1\\2g-2+n>0}} \sum_{\substack{0 \leq a_1, \ldots, a_n \leq r-1\\ d_1, \ldots, d_n \geq 0}} \frac{\eps^{2g-2}}{n!}r^{1-g}\left(\int_{\M^{1/r}_{g,(a_1, \ldots, a_n)}} c_W\cdot \psi_1^{d_1} \cdots \psi_n^{d_n}\right) t^{a_1}_{d_1} \cdots t^{a_n}_{d_n}
\end{align*}
is, after a simple change of variables, the logarithm of a tau-function of the $r$-th Gelfand--Dickey hierarchy.

\medskip

While curves with $r$-spin structure generalize the study of $\M_{g,n}$ in one direction, a different direction was taken up by Pandharipande, Solomon, and the third author in \cite{PST14}, in which the study of intersection theory on the moduli space $\oM_{g,k,l}$ of Riemann surfaces with boundary was initiated.  Here, the genus $g$ of a Riemann surface with boundary $(C,\d C)$ is defined as the genus of the closed surface obtained by gluing two copies of $C$ along the boundary $\d C$, and the numbers $k$ and $l$ are the numbers of boundary and internal marked points, respectively. In~\cite{PST14}, intersection numbers on $\M_{0,k,l}$, called \emph{open intersection numbers} and denoted by
$$
\<\tau_{d_1}\cdots\tau_{d_l}\sigma^k\>^o_0,\quad d_i\ge 0,
$$
were defined and explicitly computed. Moreover, in~\cite{PST14} the authors proposed a conjectural description of open intersection numbers in all genera, which can be viewed as an open analogue of Witten's conjecture on $\oM_{g,n}$. A geometric construction of open intersection numbers in all genera, denoted by $\<\tau_{d_1}\cdots\tau_{d_l}\sigma^k\>^o_g$, was given by Solomon and the third author, hence the generating function 
$$
\mcF^o(t_0,t_1,\ldots,s,\eps)=\sum_{g\ge 0}\eps^{g-1}\mcF^o_g(t_0,t_1,\ldots,s):=\sum_{\substack{g,k,l\ge 0\\2g-2+k+2l>0}}\frac{\eps^{g-1}}{k!l!}\<\tau_{d_1}\cdots\tau_{d_l}\sigma^k\>^o_g t_{d_1}\cdots t_{d_l} s^k
$$
was defined as a direct generalization of $\mcF^c$; the new formal variable $s$ tracks the number of boundary marked points. The definition of the all-genus open intersection numbers can be found in~\cite{Tes15}, which summarizes the construction of~\cite{ST1}. A combinatorial formula for the open intersection numbers in all genera was given in~\cite{Tes15} and then used in~\cite{BT17} to prove the open analogue of Witten's conjecture. By~\cite{Bur15}, the generating series $\mcF^o$ gives a solution of a certain extension of the KdV hierarchy and is related to the wave function of the KdV hierarchy (corresponding to the tau-function $\exp(\mcF^c)$) by an explicit formula~\cite{Bur16}. 

\medskip

It is natural to ask whether these two generalizations of Witten's conjecture can be combined, producing an open analogue of Witten's $r$-spin conjecture.  Toward this end, in~\cite{BCT1} we defined a moduli space of {\it graded $r$-spin disks} $\M_{0,k,(a_1, \ldots, a_l)}^{1/r}$ as well as an open Witten bundle $\mathcal{W}$ and cotangent line bundles $\mathbb{L}_1, \ldots, \mathbb{L}_{l}$ at the internal marked points, and then in~\cite{BCT2} we defined the corresponding {\it open $r$-spin intersection numbers}\footnote{The construction from~\cite{PST14} is recovered as a special case, when $r=2$ and all $a_i$ are zero.}
\begin{equation*}
\<\tau_{d_1}^{a_1}\cdots\tau^{a_l}_{d_l}\sigma^k\>^{\frac{1}{r},o}_0,\quad 0\le a_i\le r-1,\quad d_i\ge 0.
\end{equation*}
Equipped with these numbers, we defined an \emph{open $r$-spin potential} in genus zero by
\[
\mcF^{\frac{1}{r},o}_0(t^*_*,s):=\sum_{\substack{k,l\geq 0\\k+2l>2}} \sum_{\substack{0 \leq a_1, \ldots, a_l \leq r-1\\ d_1, \ldots, d_l \geq 0}} \frac{1}{k!l!}\<\tau^{a_1}_{d_1}\cdots\tau^{a_l}_{d_l}\sigma^k\>^{\frac{1}{r},o}_0 t^{a_1}_{d_1} \cdots t^{a_l}_{d_l}s^k.
\]
We then considered a special solution 
$$
\phi=\sum_{g\ge 0}\eps^{g-1}\phi_g(t^*_*),\quad\phi_g\in \mbC[[t^*_*]]
$$
of a certain extension of the Gelfand--Dickey hierarchy and proved a formula for the generating series $\mcF^{\frac{1}{r},o}_0$ in terms of the formal power series $\phi_0$. We will recall the details in Section~\ref{section:GD hierarchy}.

\medskip

While these constructions and results are limited to genus zero, in~\cite{BCT2} we conjectured that for any genus $g\ge 1$ there is a geometric construction of open $r$-spin intersection numbers $\<\tau^{a_1}_{d_1}\cdots\tau^{a_l}_{d_l}\sigma^k\>^{\frac{1}{r},o}_g$ generalizing our construction in genus zero.  Given such intersection numbers, we defined a generating series~$\mcF_g^{\frac{1}{r},o}(t^*_*,s)$ by
$$
\mcF_g^{\frac{1}{r},o}(t^*_*,s):=\sum_{l,k\ge 0}\frac{1}{l!k!}\sum_{\substack{0\le a_1,\ldots,a_l\le r-1\\d_1,\ldots,d_l\ge 0}}\<\tau^{a_1}_{d_1}\cdots\tau^{a_l}_{d_l}\sigma^k\>^{\frac{1}{r},o}_g t^{a_1}_{d_1}\cdots t^{a_l}_{d_l}s^k,
$$
and conjectured an explicit formula for it in terms of the formal power series $\phi_g$.  See Conjecture~\ref{main conjecture} below for the explicit statement.

\medskip

In this paper, we study the main conjecture from~\cite{BCT2} in more detail. One can expect that the numbers $\<\tau^{a_1}_{d_1}\cdots\tau^{a_l}_{d_l}\sigma^k\>^{\frac{1}{r},o}_g$ satisfy a series of natural properties:
\begin{enumerate}
\item In the case $r=2$ and $a_1=\cdots=a_l=0$, it is natural to identify the corresponding open $r$-spin intersection numbers with the intersection numbers on $\oM_{g,k,l}$.

\smallskip

\item From the dimension of the moduli space of genus-$g$ $r$-spin surfaces with boundary and an expected formula for the degree of an open analogue of Witten's class, one can see that the open $r$-spin intersection numbers should vanish unless a dimension constraint is satisfied.

\smallskip

\item From a natural expectation for the behavior of an open analogue of Witten's class under the map forgetting a marked point of twist $0$, one can see that the generating series~$\mcF^{\frac{1}{r},o}_g$ should satisfy open string and open dilaton equations. The genus-zero part of these equations was proved in~\cite{BCT2}, while analogous equations for the intersection numbers~$\<\tau_{d_1}\cdots\tau_{d_l}\sigma^k\>^o_g$ were conjectured in~\cite{PST14} and proved in~\cite{BT17} (a geometric proof will appear in~\cite{ST1}).

\smallskip

\item Based on the genus-one topological recursion relations for the intersection numbers $\<\tau_{d_1}\ldots\tau_{d_l}\sigma^k\>^o_1$ conjectured by the authors of~\cite{PST14} and proved by Solomon and the third author (a geometric proof will appear in~\cite{ST1}), we expect a natural generalization of these relations to be satisfied by the open $r$-spin intersection numbers in genus one.  
\end{enumerate}

\medskip

The main result of this paper, proved in Section~\ref{section:evidence}, is that all of these expected properties of the open $r$-spin intersection numbers agree with Conjecture~\ref{main conjecture}.

\medskip

\noindent{\bf Convention.} We use the standard convention of sum over repeated Greek indices.

\medskip

\noindent{\bf Acknowledgements.} The work of A.~B. is an output of a research project implemented as part of the Basic Research Program at the National Research University Higher School of Economics (HSE University).  E.C. was supported by NSF CAREER grant 2137060.  R.T. (incumbent of the Lillian and George Lyttle Career Development Chair) was supported by a research grant from the Center for New Scientists of Weizmann Institute and by the ISF (grant No. 335/19).

\medskip


\section{The Gelfand--Dickey hierarchy and the main conjecture}\label{section:GD hierarchy}

Consider formal variables $T_i$ for $i\ge 1$. A \emph{pseudo-differential operator}~$A$ is a Laurent series
$$
A=\sum_{n=-\infty}^m a_n\d_x^n,\quad a_n\in\mbC[\eps,\eps^{-1}][[T_1,T_2,\ldots]],
$$
where $m$ is an integer and $\d_x$ is a formal variable. Denote
$$
A_+:=\sum_{n=0}^m a_n\d_x^n,\qquad A_-:=A-A_+,\qquad \res A:=a_{-1}.
$$
The space of such operators is endowed with the structure of a noncommutative associative algebra, in which the multiplication, denoted by~$\circ$, is defined by the formula
\begin{gather*}
\d_x^k\circ f:=\sum_{l=0}^\infty\frac{k(k-1)\cdots(k-l+1)}{l!}\frac{\d^l f}{\d T_1^l}\d_x^{k-l},\quad f\in\mbC[\eps,\eps^{-1}][[T_*]],\quad k\in\mbZ.
\end{gather*}
We identify $x=T_1$, and in the case $A_-=0$ we interpret $A$ as a differential operator acting in the space of formal power series $\mbC[[T_*]]$ in the obvious way.

\medskip

For any $r\ge 2$ and any pseudo-differential operator~$A$ of the form 
$$
A=\d_x^r+\sum_{n=1}^\infty a_n\d_x^{r-n},
$$ 
there exists a unique pseudo-differential operator $A^{\frac{1}{r}}$ of the form 
$$
A^{\frac{1}{r}}=\d_x+\sum_{n=0}^\infty \widetilde{a}_n\d_x^{-n}
$$
such that $\left(A^{\frac{1}{r}}\right)^r=A$.

\medskip

Let $r\ge 2$, and consider the pseudo-differential operator
$$
L:=\d_x^r+\sum_{i=0}^{r-2}f_i\d_x^i,\quad f_i\in\mbC[\eps,\eps^{-1}][[T_*]].
$$
For any $n\ge 1$, the commutator $[(L^{n/r})_+,L]$ has the form $\sum_{i=0}^{r-2}h_i\d_x^i$ with $h_i\in\mbC[\eps,\eps^{-1}][[T_*]]$. The \emph{$r$-th Gelfand--Dickey hierarchy} is the following system of partial differential equations for the formal power series $f_0,f_1,\ldots,f_{r-2}$:
\begin{gather*}
\frac{\d L}{\d T_n}=\eps^{n-1}[(L^{n/r})_+,L],\quad n\ge 1.
\end{gather*}

\medskip

Consider the solution $L$ of the Gelfand--Dickey hierarchy specified by the initial condition
\begin{gather}\label{eq:initial condition for GD}
L|_{T_{\ge 2}=0}=\d_x^r+\eps^{-r}rx.
\end{gather}
The $r$-spin Witten conjecture states that $\frac{\d\mcF^{\frac{1}{r},c}}{\d t^{r-1}_d}=0$ for $d\ge 0$, and under the change of variables
\begin{gather*}
T_k=\frac{1}{(-r)^{\frac{3k}{2(r+1)}-\frac{1}{2}-d}k!_r}t^a_d,\quad 0\le a\le r-2,\quad d\ge 0,
\end{gather*}
where $k=a+1+rd$ and $k!_r:=\prod_{i=0}^d(a+1+ri)$, we have
$$
\res L^{n/r}=\eps^{1-n}\frac{\d^2\mcF^{\frac{1}{r},c}}{\d T_1\d T_n},\quad n\ge 1,\quad r\nmid n.
$$

\medskip

With $L$ as above, let~$\Phi(T_*,\eps)\in\mbC[\eps,\eps^{-1}][[T_*]]$ be the solution of the system of equations
\begin{gather}\label{eq:system for Phi}
\frac{\d\Phi}{\d T_n}=\eps^{n-1}(L^{n/r})_+\Phi,\quad n\ge 1,
\end{gather}
that satisfies the initial condition $\left.\Phi\right|_{T_{\ge 2}=0}=1$. Consider the expansion
$$
\phi:=\log\Phi=\sum_{g\in\mbZ}\eps^{g-1}\phi_g,\quad \phi_g\in\mbC[[T_*]].
$$
Note that by~\cite[Lemma~4.4]{BCT_Closed_Extended} $\phi_g=0$ for $g<0$. Comparing to the formal power series~$\mcF^{\frac{1}{r},c}_0$, which depends only on the variables $t^0_d,\ldots,t^{r-2}_d$, the formal power series~$\mcF^{\frac{1}{r},o}_0$ depends also on~$t^{r-1}_d$ and~$s$. We relate the variables $T_{mr}$ and $t^{r-1}_{m-1}$ as follows:
\begin{gather*}
T_{mr}=\frac{1}{(-r)^{\frac{m(r-2)}{2(r+1)}}m!r^m}t^{r-1}_{m-1},\quad m\ge 1.
\end{gather*}

\medskip

In~\cite{BCT2} we proved the following result:
\begin{gather*}\label{eq:main result}
\mcF^{\frac{1}{r},o}_0=\frac{1}{\sqrt{-r}}\phi_0\big|_{t^{r-1}_d\mapsto \frac{1}{\sqrt{-r}}(t^{r-1}_d-r\delta_{d,0}s)}-\frac{1}{\sqrt{-r}}\phi_0\big|_{t^{r-1}_d\mapsto\frac{1}{\sqrt{-r}}t^{r-1}_d}.
\end{gather*}

\medskip

Regarding the open $r$-spin intersection numbers in higher genera, we proposed the following conjecture.

\begin{conj}[\cite{BCT2}]\label{main conjecture}
For any $g\ge 1$ we have
$$
\mcF^{\frac{1}{r},o}_g=\left.(-r)^{\frac{g-1}{2}}\phi_g\right|_{t^{r-1}_d\mapsto\frac{1}{\sqrt{-r}}(t^{r-1}_d-\delta_{d,0}rs)}.
$$
\end{conj}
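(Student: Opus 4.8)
The plan is to prove the identity by a \emph{uniqueness} argument rather than by direct computation: I would exhibit a system of recursive relations that, together with the already-established genus-zero formula of~\cite{BCT2} as a base case, \emph{determines} each $\mcF^{\frac{1}{r},o}_g$ \emph{uniquely}, and then verify that the right-hand side $R_g:=(-r)^{\frac{g-1}{2}}\phi_g\big|_{t^{r-1}_d\mapsto\frac{1}{\sqrt{-r}}(t^{r-1}_d-\delta_{d,0}rs)}$ satisfies the very same relations with the very same base case. The two then coincide. Because the higher-genus open $r$-spin intersection numbers do not yet possess a geometric definition, the argument is necessarily conditional on the conjectural geometric construction generalizing the genus-zero one; what I propose is the machine that outputs the identity the moment that construction, with its expected properties, is in place. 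The crucial feature distinguishing this from a mere consistency check is the step establishing that the recursions are \emph{determining}: it is this that converts necessary conditions into a proof of equality.

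First I would assemble the determining system on the geometric side. Granting the construction of $\<\tau^{a_1}_{d_1}\cdots\tau^{a_l}_{d_l}\sigma^k\>^{\frac{1}{r},o}_g$, I would establish, for all $g\ge 1$, the open string and open dilaton equations (property (3)), the dimension constraint (property (2)), and the all-genus generalization of the topological recursion relations whose genus-one form is property (4). The key structural point is that, exactly as in the $r=2$ case of~\cite{PST14,Tes15,BT17}, these relations combined with the pure-closed $r$-spin data express the top-complexity open correlators — those with largest total descendent $\sum d_i$, or largest number of boundary insertions $k$ — in terms of correlators of strictly smaller complexity or lower genus. Organizing an induction on $g$ and a secondary induction on $(k,l,\sum d_i)$, with the genus-zero formula as the initial step, I would conclude that $\mcF^{\frac{1}{r},o}_g$ is uniquely pinned down. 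The delicate part here is to read off from the boundary geometry the precise shape of the all-genus recursions; in genus one these are available (property (4)), but for $g\ge 2$ one must identify the correct open $r$-spin boundary strata and compute their contributions.

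Next I would show that $R_g$ obeys the same system. The starting point is the defining linear system~\eqref{eq:system for Phi} for the wave function $\Phi$ together with $\phi=\log\Phi$. Writing $\frac{\d\phi}{\d T_n}=\eps^{n-1}\Phi^{-1}(L^{n/r})_+\Phi$ and expanding in $\eps$ yields, genus by genus, a hierarchy of equations for the $\phi_g$ in which the operators $(L^{n/r})_+$ — whose residues already encode $\mcF^{\frac{1}{r},c}$ — act as the algebraic counterpart of the geometric boundary recursions. I would transport these equations through the change of variables $T_k\leftrightarrow t^a_d$ and $T_{mr}\leftrightarrow t^{r-1}_{m-1}$ and through the substitution $t^{r-1}_d\mapsto\frac{1}{\sqrt{-r}}(t^{r-1}_d-\delta_{d,0}rs)$. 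The role of the shift by $\delta_{d,0}rs$ is precisely to convert the $T_r$-flow of the wave function into the insertion of boundary marked points tracked by $s$: matching the $n=1$ flow reproduces the open string equation, the appropriate combination of flows reproduces the open dilaton equation, and the higher flows reproduce the open TRR. Genus-zero compatibility is supplied automatically by the proven genus-zero result of~\cite{BCT2}, which is the $g=0$ instance of this same mechanism.

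The main obstacle is twofold. At the structural level the argument is conditional: the left-hand side exists only once the conjectural all-genus geometric construction does, so the genuinely new geometric input — the all-genus boundary recursions for $g\ge 2$ — cannot yet be furnished and is the principal gap. At the technical level, even granting the geometry, the hardest step is to show that the wave-function flows, after the boundary substitution, reproduce the open recursions \emph{exactly}, including every scalar normalization: the powers of $(-r)^{1/2}$ and the factorials $k!_r$ buried in the variable change. This is the point at which the $r=2$ proofs of~\cite{Bur15,Bur16,BT17} demanded their most intricate computations, and the general-$r$ analogue — in particular the precise bookkeeping of the $T_{mr}\leftrightarrow t^{r-1}_{m-1}$ dictionary against the boundary shift — is where I expect the real difficulty to concentrate.
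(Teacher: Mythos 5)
You should note first that the paper does not prove this statement at all: it is stated and remains a conjecture, and Section~\ref{section:evidence} supplies only consistency checks. Concretely, the paper verifies unconditionally that the right-hand side $R_g=(-r)^{\frac{g-1}{2}}\phi_g|_{t^{r-1}_d\mapsto\frac{1}{\sqrt{-r}}(t^{r-1}_d-\delta_{d,0}rs)}$ has the properties one expects of $\mcF^{\frac{1}{r},o}_g$: the $r=2$ specialization via~\eqref{eq:open and phi}, the dimension constraint \eqref{eq:dimension condition, all genera}, the string and dilaton equations \eqref{eq:open rspin string}--\eqref{eq:open rspin dilaton} (the dilaton equation for $\phi$ being the paper's Proposition), and the genus-one relations \eqref{eq:open genus 1 TRR} via Theorem~\ref{theorem:TRR1}. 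So the second half of your plan --- showing that $R_g$ obeys the geometric-looking relations by expanding $\d\phi/\d T_n=\eps^{n-1}\Phi^{-1}(L^{n/r})_+\Phi$ genus by genus --- is, in substance, what the paper actually carries out (compare the lemma computing $\d\phi_1/\d T_a$ and equation~\eqref{eq:open GD in genus 0}), and that part of your proposal is sound.

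The genuine gap is your determining step, and it is fatal to the claim of a proof rather than a consistency check. First, the left-hand side has no definition for $g\ge 1$: the conjecture is a prediction for a construction that does not yet exist, so at best you could prove ``any construction satisfying axioms $X$ has generating series $R_g$,'' and you would then have to specify axioms $X$ strong enough to be determining --- which is exactly the missing content, not a formality. Second, even granting a construction with properties (1)--(4), those properties do \emph{not} pin down $\mcF^{\frac{1}{r},o}_g$: the string and dilaton equations only constrain the $t^0_0$- and $t^0_1$-derivatives, the dimension constraint is a grading condition, and the topological recursion relations are available (even conjecturally, in this paper) only in genus one; for $g\ge 2$ there is no known recursion, and even in genus one the TRR~\eqref{eq:open genus 1 TRR} reduces descendents $t^\alpha_{p+1}$ to lower ones but leaves the primary genus-one open correlators undetermined. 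Your appeal to the $r=2$ precedent misreads how that case was closed: in~\cite{BT17} the all-genus identity \eqref{eq:open and phi} was proved via a combinatorial formula and matrix-model techniques establishing the full open KdV statement, not by a string$+$dilaton$+$TRR uniqueness scheme --- precisely because those relations alone are not a determining system. To your credit, you flag the absence of all-genus boundary recursions for $g\ge 2$ as the principal gap; but since that is the step that would convert the paper's (and your) verifications from evidence into proof, the proposal as it stands is a program, and its conclusion is exactly the conjecture, not a theorem.
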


\medskip


\section{Evidence for the main conjecture}\label{section:evidence}

\subsection{The case $r=2$}

Suppose $r=2$. In~\cite{BCT2} we proved that
\begin{gather}
\<\tau^0_{d_1}\cdots\tau^0_{d_l}\sigma^k\>^{\frac{1}{2},o}_0=(-2)^{\frac{k-1}{2}}\<\tau_{d_1}\cdots\tau_{d_l}\sigma^k\>^{o}_0,
\end{gather}
where we recall that the right-hand side refers to the intersection numbers on $\oM_{0,k,l}$. Regarding higher genus, in~\cite{BT17} the authors proved that
\begin{gather}\label{eq:open and phi}
\mcF^o_{g}(t_0,t_1,\ldots,s)=\phi_g|_{\substack{t^0_d=t_d\\t^1_d=\delta_{d,0}s}}.
\end{gather}
Thus, it is natural to identify
\begin{gather}\label{eq:r2 identification}
\<\tau^0_{d_1}\cdots\tau^0_{d_l}\sigma^k\>^{\frac{1}{2},o}_g:=(-2)^{\frac{g+k-1}{2}}\<\tau_{d_1}\cdots\tau_{d_l}\sigma^k\>^{o}_g.
\end{gather}
Indeed, note that the factor $2^{-\frac{g+k-1}{2}}$ is forcibly included in the definition of the intersection numbers~$\<\tau_{d_1}\cdots\tau_{d_l}\sigma^k\>^o_g$. Also, note that the intersection number $\<\tau_{d_1}\cdots\tau_{d_l}\sigma^k\>^o_g$ is zero unless $g+k$ is odd. Thus, changing the orientation of~$\oM_{g,k,l}$ by $(-1)^{\frac{g+k-1}{2}}$ when $g+k$ is odd gives the additional factor $(-1)^{\frac{g+k-1}{2}}$. This confirms Conjecture~\ref{main conjecture} in the case $r=2$ after setting $t^1_d=0$.

\medskip

\subsection{Dimension constraint}

It is natural to expect that the open $r$-spin intersection number
\begin{gather*}
\<\tau^{\alpha_1}_{d_1}\cdots\tau^{\alpha_l}_{d_l}\sigma^k\>^{\frac{1}{r},o}_g
\end{gather*}
is zero unless
\[
\frac{(g+k-1)(r-2)+2\sum\alpha_i}{r}+2\sum d_i=3g-3+2l+k.
\]
Indeed, the right-hand side is the dimension of the moduli space of $r$-spin genus-$g$ surfaces with boundary, while the left-hand side is the virtual dimension of the purported open analogue of the virtual fundamental cycle that should correspond to the intersection problem.  This is equivalent to the constraint
\begin{equation}\label{eq:dimension condition, all genera}
\sum\left(\frac{\alpha_i}{r}+d_i-1\right)-\frac{k}{r}=\frac{(r+1)(g-1)}{r}.
\end{equation}
On the other hand, in~\cite[Lemma~4.4]{BCT_Closed_Extended} we proved that the derivative $\left.\frac{\d^n\phi_g}{\d t^{\alpha_1}_{d_1}\cdots\d t^{\alpha_n}_{d_n}}\right|_{t^*_*=0}$ is zero unless $\sum\left(\frac{\alpha_i}{r}+d_i-1\right)=\frac{(r+1)(g-1)}{r}$. We see that if Conjecture~\ref{main conjecture} is true then this gives exactly the expected constraint for the open $r$-spin intersection numbers.

\medskip

\subsection{Open string and open dilaton equations}

One would expect the formal power series~$\mcF_g^{\frac{1}{r},o}$ to satisfy the open string and the open dilaton equations
\begin{align}
&\frac{\d\mcF_g^{\frac{1}{r},o}}{\d t^0_0}=\sum_{n\ge 0}t^\alpha_{n+1}\frac{\d\mcF_g^{\frac{1}{r},o}}{\d t^\alpha_n}+\delta_{g,0}s,\label{eq:open rspin string}\\
&\frac{\d\mcF_g^{\frac{1}{r},o}}{\d t^0_1}=(g-1)\mcF_g^{\frac{1}{r},o}+\sum_{n\ge 0}t^\alpha_n\frac{\d\mcF_g^{\frac{1}{r},o}}{\d t^\alpha_n}+s\frac{\d\mcF_g^{\frac{1}{r},o}}{\d s}+\delta_{g,1}\frac{1}{2}.\label{eq:open rspin dilaton}
\end{align}

\medskip

Indeed, in the case $r=2$ and $t^1_d=0$, using the identification~\eqref{eq:r2 identification}, these equations become exactly the open string and the open dilaton equations for the generating series $\mcF^o_g$, which where conjectured in~\cite{PST14}, proved there in genus $0$ using geometric technique, and proved in all genera in~\cite{BT17} using a matrix model (a geometric proof will appear in~\cite{ST1}).

\medskip

In~\cite[Proposition~5.2]{BCT2}, we proved equations~\eqref{eq:open rspin string} and~\eqref{eq:open rspin dilaton} in genus zero for any $r\ge 2$. Although this was done using an open-closed correspondence and open string and open dilaton equations for the closed extended $r$-spin intersection numbers, we could also prove them geometrically, imitating the proofs in~\cite{PST14}. In particular, we expect the geometric proof to work in all genera and all $r\geq 2$, given a construction of an open virtual fundamental cycle for the higher-genus Witten bundle that satisfies certain expected properties (being pulled back from the moduli without markings of twist zero, for example, and boundary behavior similar to that of canonical sections in genus zero). 

\medskip

By~\cite[Theorem~1.2]{BY15} (see also~\cite[Lemmas~4.5]{BCT_Closed_Extended}) we have
\begin{gather*}
\frac{\d\phi_g}{\d t^0_0}=\sum_{n\ge 0}t^\alpha_{n+1}\frac{\d\phi_g}{\d t^\alpha_n}+\delta_{g,0}t^{r-1}_0.
\end{gather*}
If Conjecture~\ref{main conjecture} is true, then this implies the expected equation~\eqref{eq:open rspin string}.  The expected equation \eqref{eq:open rspin dilaton}, on the other hand, is implied by Conjecture~\ref{main conjecture} by way of the following proposition.

\medskip

\begin{prop}
We have
\begin{gather}\label{eq:dilaton for phi}
\frac{\d\phi}{\d t^0_1}=\sum_{n\ge 0}t^\alpha_n\frac{\d\phi}{\d t^\alpha_n}+\eps\frac{\d\phi}{\d\eps}+\frac{1}{2}.
\end{gather}
\end{prop}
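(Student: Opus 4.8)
The plan is to pass to the time variables $T_k$, use the dimension constraint to eliminate the $\eps$-derivative, and then read off the statement from the defining equations of the wave function $\Phi$.

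First I would rewrite \eqref{eq:dilaton for phi} in the variables $T_k$. Since the change of variables is a coordinatewise rescaling, the weight-one Euler field is preserved, $\sum_{n\ge0}t^\alpha_n\frac{\d}{\d t^\alpha_n}=\sum_{k\ge1}T_k\frac{\d}{\d T_k}$, while $\eps\frac{\d}{\d\eps}$ is untouched. A direct substitution in the stated change of variables gives $t^0_1=(r+1)T_{r+1}$, so that $\frac{\d}{\d t^0_1}=\frac{1}{r+1}\frac{\d}{\d T_{r+1}}$. Hence \eqref{eq:dilaton for phi} is equivalent to
\[
\frac{1}{r+1}\frac{\d\phi}{\d T_{r+1}}=\Big(\sum_{k\ge1}T_k\frac{\d}{\d T_k}+\eps\frac{\d}{\d\eps}\Big)\phi+\frac12 .
\]

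Next I would remove the $\eps$-derivative. Reading \cite[Lemma~4.4]{BCT_Closed_Extended} as a differential equation, the homogeneity of $\phi_g$ of degree $\frac{(r+1)(g-1)}{r}$ becomes, after rescaling to the $T_k$ and summing against $\eps^{g-1}$, the identity $\sum_{k\ge1}(k-r-1)T_k\frac{\d\phi}{\d T_k}=(r+1)\eps\frac{\d\phi}{\d\eps}$, that is,
\[
\sum_{k\ge1}kT_k\frac{\d\phi}{\d T_k}=(r+1)\Big(\sum_{k\ge1}T_k\frac{\d}{\d T_k}+\eps\frac{\d}{\d\eps}\Big)\phi .
\]
Substituting this into the previous display collapses the whole claim to the single, $\eps$-free equation
\[
\frac{\d\phi}{\d T_{r+1}}=\sum_{k\ge1}kT_k\frac{\d\phi}{\d T_k}+\frac{r+1}{2}.
\]

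To prove this last equation I would work at the level of $\Phi$. Writing $V:=\frac{\d}{\d T_{r+1}}-\sum_{k\ge1}kT_k\frac{\d}{\d T_k}$ and using that $\phi=\log\Phi$ with $V$ a derivation, it is equivalent to $V\Phi=\frac{r+1}{2}\Phi$. Feeding in the flows $\frac{\d\Phi}{\d T_n}=\eps^{n-1}(L^{n/r})_+\Phi$ turns $V\Phi$ into $P\Phi$, where
\[
P:=\eps^r(L^{(r+1)/r})_+-\sum_{k\ge1}kT_k\eps^{k-1}(L^{k/r})_+=\widehat P_+,\qquad \widehat P:=\eps^r L^{(r+1)/r}-\sum_{k\ge1}kT_k\eps^{k-1}L^{k/r},
\]
is a differential operator in $x=T_1$; so everything reduces to the operator identity $P\Phi=\frac{r+1}{2}\Phi$. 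The Lax equations give, for free, $VL=[P,L]$, so $V$ is the Hamiltonian flow generated by $P$. Observing that $\sum_{k\ge1}kT_k\eps^{k-1}L^{k/r}=rL\,M$ with $M:=\frac1r\sum_{k\ge1}kT_k\eps^{k-1}L^{(k-r)/r}$ the Orlov--Schulman operator, I would control the negative part of $\widehat P$ using the string equation (stated just above for $\phi$, and in operator form a statement about $M_-$), conclude that $V\phi=P\Phi/\Phi$ is independent of all the times, and finally evaluate this constant at the initial locus $T_{\ge2}=0$, where $\Phi\equiv1$ and $L=\d_x^r+\eps^{-r}rx$ is explicit, so that $V\phi$ equals $\eps^r$ times the $\d_x^0$-coefficient of $(L^{(r+1)/r})_+$, which one computes to be $\frac{r+1}{2}$.

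The first two steps are purely formal; the main obstacle is the last one. The genuine content is twofold: first, proving that $V\phi$ is constant, i.e.\ that the $L_0$-type scaling $V$ is compatible with every flow of the hierarchy---this follows from $VL=[P,L]$ by a Zakharov--Shabat computation, but requires careful tracking of the negative-order tails of the fractional powers $L^{k/r}$ and of the string equation for $M$; and second, extracting the exact anomaly $\frac{r+1}{2}$ from the leading symbol of $(L^{(r+1)/r})_+$ together with the bookkeeping of $\eps$-powers. As a consistency check, for $r=2$ this anomaly matches the known genus-one term $\delta_{g,1}\frac12$ of the open dilaton equation via \eqref{eq:open and phi}.
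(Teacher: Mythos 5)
Your reduction steps and your endgame are sound: passing to the $T$-variables (indeed $t^0_1=(r+1)T_{r+1}$ and the unweighted Euler field is preserved), trading $\eps\frac{\d}{\d\eps}$ for the weighted Euler operator via the homogeneity of $\phi_g$ from \cite[Lemma~4.4]{BCT_Closed_Extended}, and evaluating the constant at $T_{\ge 2}=0$ using $\left.(L^{\frac{r+1}{r}})_+\right|_{T_{\ge2}=0}=\d_x^{r+1}+(r+1)\eps^{-r}x\d_x+\frac{r+1}{2}\eps^{-r}$ all check out and are consistent with the paper's computation of the initial condition $\frac12$ for $O\Phi$. But the heart of the proof --- showing that $V\Phi/\Phi$ is independent of all the times --- is exactly the part you do not carry out, and the mechanism you propose for it starts from a false identity. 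With $V=\frac{\d}{\d T_{r+1}}-\sum_k kT_k\frac{\d}{\d T_k}$ and $P=\eps^r(L^{(r+1)/r})_+-\sum_k kT_k\eps^{k-1}(L^{k/r})_+$, the Lax equations do \emph{not} give $VL=[P,L]$ ``for free'': the $k=1$ term has the non-constant coefficient $T_1=x$, and $[x\d_x,L]=x[\d_x,L]+[x,L]\d_x=T_1\frac{\d L}{\d T_1}+[x,L]\d_x$, so $[P,L]=VL-[x,L]\d_x$ with $[x,L]\neq 0$. This is precisely the standard anomaly of the $L_0$-type Virasoro flow, and repairing it is where the Orlov--Schulman operator, its string equation, and the control of negative-order tails all enter; you flag these as ``the main obstacle'' but supply no argument, so the proposal is a plan rather than a proof.

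The paper closes this gap by a different and cleaner route: it keeps the operator $O=\frac{1}{r+1}\frac{\d}{\d T_{r+1}}-\eps\frac{\d}{\d\eps}-\sum_i T_i\frac{\d}{\d T_i}$ (no homogeneity reduction), invokes the Faber--Shadrin--Zvonkine theorem to identify $\tau^{\frac1r,c}=\exp(\mcF^{\frac1r,c})$ as the tau-function of $L$, writes the dressing operator as $\widehat P=G_z(\tau^{\frac1r,c})/\tau^{\frac1r,c}$, and feeds in the \emph{closed} dilaton equation $O\tau^{\frac1r,c}=\frac{r-1}{24}\tau^{\frac1r,c}$ to obtain $\left(z\frac{\d}{\d z}+O\right)\widehat P=0$ and hence the scaling identities $\left(z\frac{\d}{\d z}+O\right)\widehat{L^{n/r}}=n\,\widehat{L^{n/r}}$. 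These identities are exactly what make the commutator bookkeeping work, giving $\frac{\d}{\d T_n}(O\Phi)=\eps^{n-1}(L^{n/r})_+(O\Phi)$, so that $O\Phi$ and $\frac12\Phi$ solve the same linear system with the same initial condition. If you want to salvage your version, you would need to prove the analogous scaling identity for $V$ (equivalent to the paper's by the same homogeneity you already used), and the only input available for that is the closed dilaton equation via the dressing operator --- i.e.\ the ingredient your sketch is missing.
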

\begin{proof}
In the variables $T_i$, equation~\eqref{eq:dilaton for phi} looks as follows:
\begin{gather}\label{eq:dilaton for phi,2}
\left(\frac{1}{r+1}\frac{\d}{\d T_{r+1}}-\eps\frac{\d}{\d\eps}-\sum_{i\ge 1}T_i\frac{\d}{\d T_i}\right)\phi=\frac{1}{2}.
\end{gather}
Before we proceed with the proof, let us recall more facts from the theory of the Gelfand--Dickey hierarchy (see, e.g.,~\cite{Dic03}).

\medskip

Consider a pseudo-differential operator $A=\sum_{n=-\infty}^m a_n(T_*,\eps)\d_x^n$. The Laurent series
$$
\widehat A(T_*,\eps,z):=\sum_{n=-\infty}^m a_n(T_*,\eps)z^n,
$$
in which $z$ is a formal variable, is called the {\it symbol} of the operator~$A$. Suppose an operator~$L$ is a solution of the Gelfand--Dickey hierarchy. Then there exists a pseudo-differential operator~$P$ of the form
\begin{gather*}
P=1+\sum_{n\ge 1}p_n(T_*,\eps)\d_x^{-n},
\end{gather*}
satisfying $L=P\circ\d_x^r\circ P^{-1}$ and
\begin{gather*}
\frac{\d P}{\d T_n}=-\eps^{n-1}\left(L^{n/r}\right)_-\circ P,\quad n\ge 1.
\end{gather*}
The operator $P$ is called a \emph{dressing operator} of the operator $L$.

\medskip

Denote by $G_z$ the shift operator that acts on a formal power series $f\in\mbC[\eps,\eps^{-1}][[T_1,T_2,\ldots]]$ as follows:
\begin{gather*}
G_z(f)(T_1,T_2,T_3,\ldots):=f\left(T_1-\frac{1}{z},T_2-\frac{1}{2\eps z^2},T_3-\frac{1}{3\eps^2 z^3},\ldots\right).
\end{gather*}
Let $P=1+\sum_{n\ge 1}p_n(T_*,\eps)\d_x^{-n}$ be a dressing operator of some operator $L$ satisfying the Gelfand--Dickey hierarchy. Then there exists a series $\tau\in\mbC[\eps,\eps^{-1}][[T_1,T_2,T_3,\ldots]]$ with constant term $\left.\tau\right|_{T_i=0}=1$ for which
$$
\widehat P=\frac{G_z(\tau)}{\tau}.
$$
The series $\tau$ is called a \emph{tau-function} of the Gelfand--Dickey hierarchy. The operator $L$ can be reconstructed from the tau-function $\tau$ by the following formula:
$$
\res L^{n/r}=\eps^{1-n}\frac{\d^2\log\tau}{\d T_1\d T_n},\quad n\ge 1.
$$

\medskip

Denote the linear differential operator in the brackets on the left-hand side of equation~\eqref{eq:dilaton for phi,2} by~$O$. Let $L$ be the solution of the Gelfand--Dickey hierarchy specified by the initial condition~\eqref{eq:initial condition for GD}. Let us show that
\begin{gather}\label{eq:dilaton for L}
\left(z\frac{\d}{\d z}+O\right)\widehat L=r\widehat L.
\end{gather}
Witten's $r$-spin conjecture~\cite{Witten93}, proved by Faber--Shadrin--Zvonkine~\cite{FSZ10}, says that the formal power series~$\tau^{\frac{1}{r},c}:=\exp(\mcF^{\frac{1}{r},c})$ is a tau-function of the Gelfand--Dickey hierarchy corresponding to the operator~$L$. Therefore, a dressing operator $P$ of the operator $L$ is given by 
$$
\widehat P=\frac{G_z(\tau^{\frac{1}{r},c})}{\tau^{\frac{1}{r},c}}.
$$
The function~$\tau^{\frac{1}{r},c}$ satisfies the dilaton equation
$$
O\tau^{\frac{1}{r},c}=\frac{r-1}{24}\tau^{\frac{1}{r},c}.
$$
We compute
$$
\left(z\frac{\d}{\d z}+O\right)G_z(\tau^{\frac{1}{r},c})=G_z(O\tau^{\frac{1}{r},c})=\frac{r-1}{24}G_z(\tau^{\frac{1}{r},c}),
$$
and, thus,
$$
\left(z\frac{\d}{\d z}+O\right)\widehat P=\left(z\frac{\d}{\d z}+O\right)\frac{G_z(\tau^{\frac{1}{r},c})}{\tau^{\frac{1}{r},c}}=0.
$$
Note that the commutation relation $\left[O,\frac{\d}{\d T_1}\right]=\frac{\d}{\d T_1}$ implies that if $\left(z\frac{\d}{\d z}+O\right)\widehat{A}=a\widehat{A}$ and $\left(z\frac{\d}{\d z}+O\right)\widehat{B}=b\widehat{B}$ for some pseudo-differential operators $A,B$ and $a,b\in\mbZ$, then $\left(z\frac{\d}{\d z}+O\right)\widehat{A\circ B}=(a+b)\widehat{A\circ B}$. Since $L=P\circ\d_x^r\circ P^{-1}$, we conclude that equation~\eqref{eq:dilaton for L} is true.

\medskip

Note also that we have
\begin{gather}\label{eq:gendilaton for L}
\left(z\frac{\d}{\d z}+O\right)\widehat{L^{n/r}}=n\widehat{L^{n/r}},\quad n\ge 1,
\end{gather}
and that for any pseudo-differential operator $A$ and $f\in\mbC[[T_*]]$ we have
\begin{gather}\label{eq:identity for O}
O(A_+ f)=A_+(O f)+\left(\left.\left(z\frac{\d}{\d z}+O\right)\widehat{A}_+\right|_{z\mapsto\d_x}\right)f.
\end{gather}

\medskip

Let us finally prove equation~\eqref{eq:dilaton for phi,2}. Equivalently, we have to prove that $O\Phi=\frac{1}{2}\Phi$. We compute
\begin{align*}
&\left.L^{\frac{1}{r}}\right|_{T_{\ge 2}=0}=\d_x+\eps^{-r}x\d_x^{-r+1}-\frac{r-1}{2}\eps^{-r}\d_x^{-r}+\ldots,\\
&\left.(L^{\frac{r+1}{r}})_+\right|_{T_{\ge 2}=0}=\d_x^{r+1}+(r+1)\eps^{-r}x\d_x+\frac{r+1}{2}\eps^{-r}.
\end{align*}
Taking into account that $\Phi|_{T_{\ge 2}=0}=1$, we obtain
\begin{gather}\label{eq:initial condition for OPhi}
\left.O\Phi\right|_{T_{\ge 2}=0}=\left.\frac{1}{r+1}\frac{\d\Phi}{\d T_{r+1}}\right|_{T_{\ge 2}=0}=\left.\frac{\eps^r}{r+1}\widehat{(L^{\frac{r+1}{r}})_+}\right|_{z=T_{\ge 2}=0}=\frac{1}{2}.
\end{gather}
We also have
$$
\frac{\d}{\d T_n}(O\Phi)=(O-1)(\eps^{n-1}(L^{n/r})_+\Phi)\stackrel{\text{eqs.~\eqref{eq:gendilaton for L},\eqref{eq:identity for O}}}{=}\eps^{n-1}(L^{n/r})_+(O\Phi).
$$
We see that the formal power series $O\Phi$ satisfies the same system of PDEs~\eqref{eq:system for Phi} as the formal power series $\Phi$, with the initial condition~\eqref{eq:initial condition for OPhi}. Thus, $O\Phi=\frac{1}{2}\Phi$ and the proposition is proved.
\end{proof}

\medskip

Clearly, if Conjecture~\ref{main conjecture} is true, then the proposition implies the expected equation~\eqref{eq:open rspin dilaton}. 

\medskip

\subsection{Open topological recursion relations in genus one}

\begin{theorem}\label{theorem:TRR1}
We have
\begin{gather*}
\frac{\d\phi_1}{\d t^\alpha_{p+1}}=\sum_{\mu+\nu=r-2}\frac{\d^2\mcF^{\frac{1}{r},c}_0}{\d t^\alpha_p\d t^\mu_0}\frac{\d\phi_1}{\d t^\nu_0}+\frac{\d\phi_0}{\d t^\alpha_p}\frac{\d\phi_1}{\d t^{r-1}_0}+\frac{1}{2}\frac{\d^2\phi_0}{\d t^\alpha_p\d t^{r-1}_0},\quad 0\le\alpha\le r-1,\quad p\ge 0.
\end{gather*}
\end{theorem}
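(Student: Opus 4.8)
The plan is to translate the relation into the Gelfand--Dickey times $T_i$ and to derive it from the flow equations~\eqref{eq:system for Phi} for $\Phi=e^{\phi}$, read off at order $\eps^{0}$ in the expansion $\phi=\eps^{-1}\phi_0+\phi_1+\cdots$. Under the change of variables, $\partial/\partial t^\alpha_p$ is a nonzero constant multiple of $\partial/\partial T_{k}$ with $k=\alpha+1+rp$, and $(\alpha,p)\mapsto k$ is a bijection of $\{0,\dots,r-1\}\times\mbZ_{\ge 0}$ onto $\{1,2,\dots\}$; in particular $t^\alpha_{p+1}\leftrightarrow T_{k+r}$, $t^{r-1}_0\leftrightarrow T_r$, and $t^\mu_0\leftrightarrow T_{\mu+1}$. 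The constraint $\mu+\nu=r-2$ is exactly the statement that the paired times $T_{\mu+1},T_{\nu+1}$ have indices summing to $r$, which is the $A_{r-1}$-pairing underlying the factor $\sum_{\mu+\nu=r-2}$. Thus the theorem is a relation between the $T_{k+r}$-flow of $\phi$ and its $T_k$- and $T_r$-flows.

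The engine is the factorization $L^{(k+r)/r}=L^{k/r}\circ L$, which yields $(L^{(k+r)/r})_+=(L^{k/r})_+\circ L+\big((L^{k/r})_-\circ L\big)_+$. I would apply this to $\Phi$ inside $\partial_{T_{k+r}}\phi=\eps^{k+r-1}(L^{(k+r)/r})_+\Phi/\Phi$. Using $L\Phi=\eps^{1-r}\partial_{T_r}\Phi$ (the case $n=r$ of~\eqref{eq:system for Phi}) and then $(L^{k/r})_+\partial_{T_r}\Phi=\eps^{1-k}\partial_{T_k}\partial_{T_r}\Phi-\big(\partial_{T_r}(L^{k/r})_+\big)\Phi$, the first summand becomes
\[
\eps\big(\partial_{T_k}\partial_{T_r}\phi+\partial_{T_k}\phi\,\partial_{T_r}\phi\big)-\eps^{k}\,\frac{\big(\partial_{T_r}(L^{k/r})_+\big)\Phi}{\Phi},
\]
where I used $\partial_a\partial_b\Phi/\Phi=\partial_a\partial_b\phi+\partial_a\phi\,\partial_b\phi$. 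Its coefficient of $\eps^0$ is $\partial_{T_k}\partial_{T_r}\phi_0+\partial_{T_k}\phi_0\,\partial_{T_r}\phi_1+\partial_{T_k}\phi_1\,\partial_{T_r}\phi_0$ minus the genus-one part of the correction term; here $\partial_{T_k}\partial_{T_r}\phi_0$ is (a constant multiple of) the prospective inhomogeneous term $\tfrac12\,\partial^2\phi_0/\partial t^\alpha_p\,\partial t^{r-1}_0$, and $\partial_{T_k}\phi_0\,\partial_{T_r}\phi_1$ is the prospective product $\partial\phi_0/\partial t^\alpha_p\cdot\partial\phi_1/\partial t^{r-1}_0$.

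The remaining summand $\big((L^{k/r})_-\circ L\big)_+$ carries the $A_{r-1}$-pairing. Writing $(L^{k/r})_-=\sum_{j\ge 1}w_j\,\partial_x^{-j}$ with $w_1=\res L^{k/r}$, only finitely many of the products $w_j\,\partial_x^{-j}\circ L$ (those with $j\le r$, together with the contributions of the lower coefficients $f_i$ of $L$) survive in the differential part, and their coefficients reduce, through the tau-function formalism, to the closed two-point functions; the basic instance is $\res L^{k/r}=\eps^{1-k}\,\partial^2\mcF^{\frac{1}{r},c}/\partial T_1\partial T_k$. Dividing by $\Phi$ converts each $(L^{m/r})_+\Phi/\Phi$ into $\eps^{1-m}\partial_{T_m}\phi$, and after genus expansion the dispersionless residues should assemble into $\partial^2\mcF^{\frac{1}{r},c}_0/\partial t^\alpha_p\,\partial t^\mu_0$ while the accompanying single flow gives $\partial\phi_1/\partial t^\nu_0$, producing the paired sum $\sum_{\mu+\nu=r-2}$. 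The Zakharov--Shabat zero-curvature equation, which expresses $\partial_{T_r}(L^{k/r})_+$ through $\partial_{T_k}L=\partial_{T_k}(L^{r/r})_+$ and a commutator, is then used to rewrite the correction term from the previous step and fold it into this same structure.

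The main obstacle is the one-loop bookkeeping that merges these contributions into exactly the claimed right-hand side. Three points require care: tracking the explicit $\eps$-powers (note that $\phi$ is expanded in single powers $\eps^{g-1}$, unlike $\mcF^{\frac{1}{r},c}$ in $\eps^{2g-2}$, so the ``one-loop'' coefficient is $\tfrac12$ rather than $\tfrac1{24}$); showing that the spurious cross-term $\partial_{T_k}\phi_1\,\partial_{T_r}\phi_0$ produced by the first summand cancels against, or recombines with, contributions of the second summand and the Zakharov--Shabat correction, leaving only $\partial_{T_k}\phi_0\,\partial_{T_r}\phi_1$; and verifying that the normalization constants from the $t\leftrightarrow T$ change of variables conspire to give the stated coefficients. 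A reassuring consistency check is the structural parallel with the known genus-one topological recursion relation for the closed potential $\mcF^{\frac{1}{r},c}_1$ (a consequence of Witten's $r$-spin conjecture~\cite{FSZ10}): the present relation is its wave-function analogue, with $\phi_1$ replacing $\mcF^{\frac{1}{r},c}_1$ and with the genuinely ``extended'' $t^{r-1}$-direction producing the additional product term absent in the closed case.
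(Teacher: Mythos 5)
Your overall strategy---pass to the $T$-variables, write $L^{(a+r)/r}=L^{a/r}\circ L$, split $(L^{(a+r)/r})_+=(L^{a/r})_+\circ L+\big((L^{a/r})_-\circ L\big)_+$, and extract the coefficient of $\eps^0$---is reasonable and is in the same spirit as the paper's argument (which also works in the $T$-variables and ultimately rests on an identity expressing $\big(\hL_0^{a/r}\big)_-$ through the genus-zero two-point functions, namely \eqref{eq:second important identity}). Your computation of the first summand is correct as far as it goes. But the proposal stops exactly where the proof begins: the three items you defer as ``bookkeeping'' are the entire content of the theorem, and as stated each one hides a genuine gap. First, your first summand produces $\d_{T_a}\d_{T_r}\phi_0$ with coefficient $1$ and the symmetric pair $\d_{T_a}\phi_0\,\d_{T_r}\phi_1+\d_{T_a}\phi_1\,\d_{T_r}\phi_0$ with coefficient $1$ each, whereas the target \eqref{eq:TRR for phi1 in T-variables} requires coefficients $\tfrac{a+r}{2r}$, $\tfrac{a+r}{r}$, and $0$ respectively; so the remaining two pieces must supply large, structured corrections, not merely ``fold in.'' You give no mechanism for killing $\d_{T_a}\phi_1\,\d_{T_r}\phi_0$; in the paper this happens only after $\d\phi_1/\d T_{r-b}$ on the right-hand side is itself expanded by the closed formula \eqref{eq:open GD in genus 1} and the singly-underlined symbol identity \eqref{eq:open TRR-1,first equation} is verified. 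Without some analogue of \eqref{eq:open GD in genus 1} you cannot even compare the two sides, since the right-hand side contains $\d\phi_1/\d T_{r-b}$ for all $b$.

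Second, the claim that the coefficients of $(L^{a/r})_-$ ``reduce, through the tau-function formalism, to the closed two-point functions'' is not justified by the one instance you quote. The formula $\res L^{n/r}=\eps^{1-n}\d^2\log\tau/\d T_1\d T_n$ controls only the leading coefficient $w_1$, it involves the \emph{all-genus} potential $\log\tau=\mcF^{\frac{1}{r},c}$ rather than $\mcF^{\frac{1}{r},c}_0$, and it says nothing about $w_2,\dots,w_r$. What is actually needed is the dispersionless statement \eqref{eq:second important identity} (proved in \cite[eq.~(4.22)]{BCT_Closed_Extended}), which you neither prove nor cite. Third, at order $\eps^0$ the symbol of a composition is not the product of symbols, and $L$ itself is not equal to its dispersionless part; these discrepancies are exactly what generate the terms $\big(\widehat{L_0^{a/r}}\big)_{+,1}$ and $\tfrac{a}{r}\big(\hL_0^{\frac{a}{r}-1}\hL_1\big)_+$ in \eqref{eq:open GD in genus 1}, and their evaluation requires the computation \eqref{eq:degree 1 part of a power of L} of the differential-degree-one part of $\widehat{L_0^{a/r}}$. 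Your scheme never isolates $L_0$ from $L_1$ or tracks these first-order symbol corrections, yet they are precisely what turns the coefficient of $\d_{T_a}\d_{T_r}\phi_0$ from $1$ into $\tfrac{a+r}{2r}$. Until these three points are carried out, the argument is a plausible plan rather than a proof.
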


\medskip

Before proving the theorem, let us discuss some consequences. 

\begin{cor}
The generating series of intersection numbers on $\oM_{1,k,l}$ satisfies the relation
\begin{gather}\label{eq:open TRR-1}
\frac{\d\mcF^o_1}{\d t_{p+1}}=\frac{\d^2\mcF^c_0}{\d t_p\d t_0}\frac{\d\mcF^o_1}{\d t_0}+\frac{\d\mcF^o_0}{\d t_p}\frac{\d\mcF^o_1}{\d s}+\frac{1}{2}\frac{\d^2\mcF^o_0}{\d t_p\d s},\quad p\ge 0.
\end{gather}
\end{cor}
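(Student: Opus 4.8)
The plan is to obtain the corollary as the specialization of Theorem~\ref{theorem:TRR1} to $r=2$ and $\alpha=0$, re-expressed in terms of the open and closed potentials via the substitution recorded in~\eqref{eq:open and phi}. Setting $r=2$, all twist indices satisfy $0\le\alpha,\mu,\nu\le 1$, and since $r-2=0$ the sum $\sum_{\mu+\nu=r-2}$ collapses to the single term $\mu=\nu=0$; with $\alpha=0$ and $t^{r-1}_0=t^1_0$, Theorem~\ref{theorem:TRR1} reads
\begin{gather*}
\frac{\d\phi_1}{\d t^0_{p+1}}=\frac{\d^2\mcF^{\frac{1}{2},c}_0}{\d t^0_p\,\d t^0_0}\frac{\d\phi_1}{\d t^0_0}+\frac{\d\phi_0}{\d t^0_p}\frac{\d\phi_1}{\d t^1_0}+\frac{1}{2}\frac{\d^2\phi_0}{\d t^0_p\,\d t^1_0}.
\end{gather*}

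Next I would apply the substitution $\Sigma\colon t^0_d\mapsto t_d,\ t^1_d\mapsto\delta_{d,0}s$ of~\eqref{eq:open and phi}, under which $\phi_g$ becomes $\mcF^o_g$. Because $\Sigma$ is a ring homomorphism that is linear in the $t^1$-variables, the chain rule gives $\frac{\d\phi_g}{\d t^0_p}\big|_\Sigma=\frac{\d\mcF^o_g}{\d t_p}$ and $\frac{\d\phi_g}{\d t^1_0}\big|_\Sigma=\frac{\d\mcF^o_g}{\d s}$; since $\frac{\d}{\d t^0_p}$ commutes with $\Sigma$, the mixed term satisfies $\frac{\d^2\phi_0}{\d t^0_p\,\d t^1_0}\big|_\Sigma=\frac{\d^2\mcF^o_0}{\d t_p\,\d s}$. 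Applying $\Sigma$ to the displayed identity and using that it respects products thus reproduces every term of~\eqref{eq:open TRR-1} except for the closed factor.

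The final point is to identify $\frac{\d^2\mcF^{\frac{1}{2},c}_0}{\d t^0_p\,\d t^0_0}$ with $\frac{\d^2\mcF^c_0}{\d t_p\,\d t_0}$. For $r=2$ the closed potential $\mcF^{\frac{1}{2},c}_0$ depends only on the variables $t^0_d$ (recall $\frac{\d\mcF^{\frac{1}{r},c}}{\d t^{r-1}_d}=0$), so $\Sigma$ acts on it by $t^0_d\mapsto t_d$; it therefore suffices to check that $\mcF^{\frac{1}{2},c}_0=\mcF^c_0$ after this identification. In genus zero with all twists equal to zero, the Witten bundle $(R^1\pi_*\mathcal S)^\vee$ has rank zero, so $c_W=1$, and the normalization factor $r^{1-g}=2$ in the definition of $\mcF^{\frac{1}{r},c}$ cancels the degree $\frac{1}{2}$ of the forgetful map $\oM^{1/2}_{0,(0,\dots,0)}\to\oM_{0,n}$; hence $\mcF^{\frac{1}{2},c}_0$ and $\mcF^c_0$ coincide and their mixed second derivatives agree, giving exactly~\eqref{eq:open TRR-1}. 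With Theorem~\ref{theorem:TRR1} assumed, there is no genuine obstacle: the argument is a pure specialization, and the only steps needing care are the bookkeeping of $\Sigma$ in the mixed-derivative term and the verification that the chosen normalizations make $\mcF^{\frac{1}{2},c}_0$ literally equal to $\mcF^c_0$ rather than a scalar multiple—otherwise a stray power of two would have to be tracked through the closed coefficient.
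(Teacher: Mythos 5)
Your proposal is correct and follows exactly the paper's (one-line) proof: the corollary is obtained by specializing Theorem~\ref{theorem:TRR1} to $r=2$, $\alpha=0$ and applying the substitution of equation~\eqref{eq:open and phi}. The extra details you supply --- the collapse of the sum $\sum_{\mu+\nu=0}$ to a single term, the chain-rule bookkeeping for the substitution, and the identification $\mcF^{\frac{1}{2},c}_0=\mcF^c_0$ via the cancellation of $r^{1-g}=2$ against the degree $\tfrac{1}{2}$ of the forgetful map --- are all accurate and simply make explicit what the paper leaves implicit.
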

\begin{proof}
This follows from the theorem and equation~\eqref{eq:open and phi}.
\end{proof}

\medskip

The system of relations~\eqref{eq:open TRR-1} was conjectured by the authors of \cite{PST14} and called the \emph{open topological recursion relations in genus one}. A geometric proof will appear in~\cite{ST1}. 

\medskip

We expect the following topological recursion relations for the genus-one open $r$-spin intersection numbers for any $0\le\alpha\le r-1$ and $p\ge 0$:
\begin{gather}\label{eq:open genus 1 TRR}
\frac{\d\mcF^{\frac{1}{r},o}_1}{\d t^\alpha_{p+1}}=\sum_{\mu+\nu=r-2}\frac{\d^2\mcF^{\frac{1}{r},c}_0}{\d t^\alpha_p\d t^\mu_0}\frac{\d\mcF^{\frac{1}{r},o}_1}{\d t^\nu_0}+\frac{\d \mcF^{\frac{1}{r},\text{ext}}_0}{\d t^\alpha_p}\frac{\d\mcF^{\frac{1}{r},o}_1}{\d t^{r-1}_0}+\frac{\d\mcF^{\frac{1}{r},o}_0}{\d t^\alpha_p}\frac{\d\mcF^{\frac{1}{r},o}_1}{\d s}+\frac{1}{2}\frac{\d^2\mcF^{\frac{1}{r},o}_0}{\d t^\alpha_p\d s},
\end{gather}
 where $\mcF^{\frac{1}{r},\text{ext}}_0(t^0_*,\ldots,t^{r-1}_*)$ is the generating series of genus-zero closed extended $r$-spin intersection numbers defined in the same way as the usual genus-zero $r$-spin intersection numbers, but where exactly one of $a_i$-s is equal to~$-1$. In~\cite[Theorem~4.6]{BCT_Closed_Extended}, we proved that 
$$
\mcF^{\frac{1}{r},\text{ext}}_0=\left.\sqrt{-r}\phi_0\right|_{t^{r-1}_d\mapsto\frac{1}{\sqrt{-r}}t^{r-1}_d}.
$$
The geometric proof of~\eqref{eq:open TRR-1} from~\cite{ST1} cannot work for the new system~\eqref{eq:open genus 1 TRR}, due to the lack of a rigorous construction of the open virtual fundamental cycle. Still, as in the case of the open string and dilaton equations, the claim is expected to be true, and with a similar proof, under some mild assumptions on the open virtual fundamental cycle. 

\medskip

It is easy to see that if Conjecture~\ref{main conjecture} is true, then Theorem~\ref{theorem:TRR1} implies the system of relations~\eqref{eq:open genus 1 TRR}.

\medskip

\begin{proof}[Proof of Theorem~\ref{theorem:TRR1}]
Equivalently, we have to prove that
\begin{gather}\label{eq:TRR for phi1 in T-variables}
\frac{\d\phi_1}{\d T_{a+r}}=\sum_{b=1}^{r-1}\frac{a+r}{b(r-b)}\frac{\d^2\mcF^{\frac{1}{r},c}_0}{\d T_a\d T_b}\frac{\d\phi_1}{\d T_{r-b}}+\frac{a+r}{r}\frac{\d\phi_0}{\d T_a}\frac{\d\phi_1}{\d T_r}+\frac{a+r}{2r}\frac{\d^2\phi_0}{\d T_a\d T_r},\quad a\ge 1.
\end{gather}

\medskip

In~\cite[Lemma~4.2]{BCT_Closed_Extended}, we proved that the operator $L$ has the form 
$$
L=\d_x^r+\sum_{i=0}^{r-2}\sum_{j\ge 0}\eps^{i-r+j}f_i^{[j]}\d_x^i,\quad f_i^{[j]}\in\mbC[[T_*]].
$$
Denote 
$$
L_0:=\d_x^r+\sum_{i=0}^{r-2}f_i^{[0]}\d_x^i,\qquad L_1:=\sum_{i=0}^{r-2}f_i^{[1]}\d_x^i,\qquad (f_i^{[0]})^{(k)}:=\d_x^k f_i^{[0]}.
$$
For any $a\ge 1$, the Laurent series $\widehat{L_0^{\frac{a}{r}}}$ has the form 
$$
\widehat{L_0^{\frac{a}{r}}}=\sum_{i=-\infty}^a P_i((f^{[0]}_*)^{(*)})z^i,
$$
where $P_i$ are polynomials in $(f^{[0]}_j)^{(k)}$ for $0\le j\le r-2$ and $k\ge 0$. Let us assign to $(f^{[0]}_j)^{(k)}$ differential degree~$k$. Then we can decompose the polynomials~$P_i((f^{[0]}_*)^{(*)})$ as $P_i((f^{[0]}_*)^{(*)})=\sum_{m\ge 0} P_{i,m}((f^{[0]}_*)^{(*)})$, where a polynomial $P_{i,m}$ has differential degree~$m$. Introduce the following notations:
\begin{gather*}
\left(\widehat{L_0^{\frac{a}{r}}}\right)_m:=\sum_{i=-\infty}^a P_{i,m}((f^{[0]}_*)^{(*)})z^i,\qquad\left(\widehat{L_0^{\frac{a}{r}}}\right)_{+,m}:=\sum_{i=0}^a P_{i,m}((f^{[0]}_*)^{(*)})z^i.
\end{gather*}

\medskip

\begin{lemma}
For any $a\ge 1$ we have
\begin{gather}\label{eq:open GD in genus 1}
\frac{\d\phi_1}{\d T_a}=\left.\left[(\phi_1)_x\d_z\left(\hL_0^{\frac{a}{r}}\right)_++\frac{(\phi_0)_{xx}}{2}\d^2_z\left(\hL_0^{\frac{a}{r}}\right)_++\left(\widehat{L_0^{\frac{a}{r}}}\right)_{+,1}+\frac{a}{r}\left(\hL_0^{\frac{a}{r}-1}\hL_1\right)_+\right]\right|_{z=(\phi_0)_x},
\end{gather}
where $\d_z$ denotes the partial derivative $\frac{\d}{\d z}$.
\end{lemma}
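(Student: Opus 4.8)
The plan is to read the genus-one identity directly off the linear system~\eqref{eq:system for Phi} defining $\Phi=\exp(\phi)$. Dividing the $n=a$ equation by $\Phi$ gives the closed relation
\begin{gather*}
\frac{\d\phi}{\d T_a}=\eps^{a-1}e^{-\phi}(L^{a/r})_+e^{\phi},
\end{gather*}
so the entire content of the lemma is the genus-one part of the right-hand side. First I would make the conjugation explicit: writing $(L^{a/r})_+=\sum_{i\ge 0}u_i\d_x^i$, the identity $e^{-\phi}\d_x^i e^{\phi}=(\d_x+\phi_x)^i\cdot 1$ expresses $e^{-\phi}(L^{a/r})_+e^{\phi}$ as a universal Bell-type polynomial in the $x$-derivatives of $\phi$ with coefficients $u_i$. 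This reduces the problem to a careful bookkeeping of powers of $\eps$ after substituting $\phi=\eps^{-1}\phi_0+\phi_1+O(\eps)$.

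Next I would organize the expansion along the three independent gradings that all feed into the power of $\eps$: the genus of $\phi$ (each $\phi_g$ carries $\eps^{g-1}$); the dispersive order of $L$, encoded by the splitting into $L_0,L_1,\ldots$ of~\cite[Lemma~4.2]{BCT_Closed_Extended} that appears after trading $\d_x$ for $\eps\d_x$; and the differential degree of the symbol $\widehat{L_0^{a/r}}$, where each derivative of an $f_i^{[0]}$ produced by a commutator costs exactly one power of $\eps$ in this rescaling. Extracting the leading, $\eps^{-1}$, coefficient reproduces the dispersionless relation $\frac{\d\phi_0}{\d T_a}=\widehat{(L_0^{a/r})_+}\big|_{z=(\phi_0)_x}$ (using $\d_x^i e^{\phi}\sim\eps^{-i}(\phi_0)_x^i e^{\phi}$ at leading order); this is the step that fixes the substitution $z=(\phi_0)_x$ used throughout the statement.

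Extracting the next, $\eps^0$, coefficient then produces exactly four contributions, which I would match in turn to the four terms of~\eqref{eq:open GD in genus 1}. Two come from the conjugation factor $(\d_x+\phi_x)^i\cdot 1=\phi_x^i+\binom i2\phi_{xx}\phi_x^{i-2}+\cdots$: replacing one leading $\eps^{-1}(\phi_0)_x$ in $\phi_x^i$ by the subleading $(\phi_1)_x$ yields $(\phi_1)_x\,\d_z(\widehat{L_0^{a/r}})_+\big|_{z=(\phi_0)_x}$, the combinatorial factor $i$ becoming $\d_z$, while the single two-element block $\binom i2\phi_{xx}\phi_x^{i-2}$ yields $\tfrac{(\phi_0)_{xx}}{2}\,\d^2_z(\widehat{L_0^{a/r}})_+\big|_{z=(\phi_0)_x}$, with $\binom i2$ becoming $\tfrac12\d^2_z$. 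The other two come from the operator: the differential-degree-one part of the symbol of $L_0^{a/r}$ gives $(\widehat{L_0^{a/r}})_{+,1}\big|_{z=(\phi_0)_x}$, and the first dispersive correction $L_1$ to the Lax operator, paired with the derivative $\tfrac ar$ of the $a/r$-th power, gives $\tfrac ar(\widehat{L_0^{a/r-1}}\widehat{L_1})_+\big|_{z=(\phi_0)_x}$.

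The main obstacle is controlling these three gradings simultaneously: a fixed coefficient $u_i$ of $(L^{a/r})_+$ is itself an $\eps$-series whose successive terms interleave the differential degree of $\widehat{L_0^{a/r}}$ with the dispersive index of $L$, so each of the four contributions above must be assigned its correct symbol-degree content and no piece double-counted. Here I would use the dispersionless relation obtained at order $\eps^{-1}$ to reorganize the degree-zero parts and to confirm that the genus-one pieces assemble precisely into the stated combination. Carrying out this matching cleanly, rather than any individual symbol computation, is the delicate part of the argument.
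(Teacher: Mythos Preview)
Your proposal is correct and follows essentially the same approach as the paper: start from $\frac{\d\phi}{\d T_a}=\eps^{a-1}e^{-\phi}(L^{a/r})_+e^{\phi}$, expand $e^{-\phi}\d_x^i e^{\phi}$ as a Bell-type polynomial in the $x$-derivatives of $\phi$, and extract the $\eps^0$ coefficient, pairing the two subleading pieces of the conjugation with the differential-degree-zero symbol $(\widehat{L_0^{a/r}})_+$ and the leading piece of the conjugation with the degree-one correction $R_{i,1}$, whose identification as $(\widehat{L_0^{a/r}})_{+,1}+\tfrac{a}{r}(\widehat{L_0^{a/r-1}}\widehat{L_1})_+$ is exactly what the paper records. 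Your ``three gradings'' language is a clean way to organize what the paper does by directly writing $R_i=\sum_{j\ge 0}\eps^{i-a+j}R_{i,j}$ and computing $\frac{\d_x^i e^\phi}{e^\phi}$ to order $\eps^{-i+1}$.
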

\begin{proof}
We have $\frac{\d\phi}{\d T_a}=\eps^{a-1}\frac{(L^{a/r})_+e^\phi}{e^\phi}$. The operator $L^{\frac{a}{r}}$ has the form $L^{\frac{a}{r}}=\sum_{i=-\infty}^aR_i\d_x^i$, where $R_i=\sum_{j\ge 0}\eps^{i-a+j}R_{i,j}$, $R_{i,j}\in\mbC[[T_*]]$. We have to check that
\begin{multline}\label{eq:open GD,genus 1,tmp1}
\Coef_{\eps^0}\left[\eps^{a-1}\sum_{i=0}^aR_i\frac{\d_x^ie^\phi}{e^\phi}\right]=\\
=\left.\left[(\phi_1)_x\d_z\left(\hL_0^{\frac{a}{r}}\right)_++\frac{(\phi_0)_{xx}}{2}\d^2_z\left(\hL_0^{\frac{a}{r}}\right)_++\left(\widehat{L_0^{\frac{a}{r}}}\right)_{+,1}+\frac{a}{r}\left(\hL_0^{\frac{a}{r}-1}\hL_1\right)_+\right]\right|_{z=(\phi_0)_x}.
\end{multline}

\medskip

By the induction on $i$, it is easy to prove that
\begin{gather*}
\frac{\d_x^i e^\phi}{e^\phi}=i!\sum_{\substack{m_1,m_2,\ldots\ge 0\\\sum jm_j=i}}\prod_{j\ge 1}\frac{(\d_x^j\phi)^{m_j}}{(j!)^{m_j}m_j!},\quad i\ge 0.
\end{gather*}
We have
$$
\prod_{j\ge 1}(\d_x^j\phi)^{m_j}=
\begin{cases}
\eps^{-i}(\phi_0)_x^i+\eps^{-i+1}i(\phi_0)_x^{i-1}(\phi_1)_x+O(\eps^{-i+2}),&\text{if $m_1=i$ and $m_{\ge 2}=0$},\\
\eps^{-i+1}(\phi_0)_x^{i-2}(\phi_0)_{xx}+O(\eps^{-i+2}),&\text{if $m_1=i-2$, $m_2=1$ and $m_{\ge 3}=0$},\\
O(\eps^{-i+2}),&\text{otherwise}.
\end{cases}
$$
As a result,
$$
\frac{\d_x^i e^\phi}{e^\phi}=\eps^{-i}(\phi_0)_x^i+\eps^{-i+1}\left(i(\phi_0)_x^{i-1}(\phi_1)_x+\frac{i(i-1)}{2}(\phi_0)_x^{i-2}(\phi_0)_{xx}\right)+O(\eps^{-i+2}),
$$
and
\begin{gather}\label{eq:open GD,genus 1,tmp2}
\Coef_{\eps^0}\left[\eps^{a-1}\sum_{i=0}^aR_i\frac{\d_x^ie^\phi}{e^\phi}\right]=\sum_{i=0}^a R_{i,0}\left(i(\phi_0)_x^{i-1}(\phi_1)_x+\frac{i(i-1)}{2}(\phi_0)_x^{i-2}(\phi_0)_{xx}\right)+\sum_{i=0}^aR_{i,1}(\phi_0)_x^i.
\end{gather}
Note that
$$
\sum_{i=0}^aR_{i,0}z^i=\left(\hL_0^{\frac{a}{r}}\right)_+,\qquad \sum_{i=0}^a R_{i,1}z^i=\left(\widehat{L_0^{\frac{a}{r}}}\right)_{+,1}+\frac{a}{r}\left(\hL_0^{\frac{a}{r}-1}\hL_1\right)_+.
$$
We can see now that the first sum on the right-hand side of~\eqref{eq:open GD,genus 1,tmp2} gives the first and the second terms on the right-hand side of~\eqref{eq:open GD,genus 1,tmp1}. The second sum on the right-hand side of~\eqref{eq:open GD,genus 1,tmp2} gives the third and the fourth terms on the right-hand side of~\eqref{eq:open GD,genus 1,tmp1}. This completes the proof of the lemma.
\end{proof}

\medskip

In~\cite[Lemma~4.7]{BCT_Closed_Extended}, we proved that
\begin{gather}\label{eq:open GD in genus 0}
\frac{\d\phi_0}{\d T_a}=\left.\left(\hL_0^{\frac{a}{r}}\right)_+\right|_{z=(\phi_0)_x},
\end{gather}
which implies
$$
\frac{\d^2\phi_0}{\d T_a\d T_r}=\frac{\d}{\d T_r}\left[\left.\left(\hL^{\frac{a}{r}}_0\right)_+\right|_{z=(\phi_0)_x}\right]=\left.\left((\phi_0)_{xx}\d_z\left(\hL_0^\frac{a}{r}\right)_+\d_z\hL_0+\d_z\left(\hL_0^\frac{a}{r}\right)_+\d_x\hL_0\right)\right|_{z=(\phi_0)_x}.
$$
Therefore, equation~\eqref{eq:TRR for phi1 in T-variables} is equivalent to
\begin{align*}
\frac{\d\phi_1}{\d T_{a+r}}=&\sum_{b=1}^{r-1}\frac{a+r}{b(r-b)}\frac{\d^2\mcF^{\frac{1}{r},c}_0}{\d T_a\d T_b}\frac{\d\phi_1}{\d T_{r-b}}+\frac{a+r}{r}\frac{\d\phi_0}{\d T_a}\frac{\d\phi_1}{\d T_r}+\\
&+\frac{a+r}{2r}\left.\left((\phi_0)_{xx}\d_z\left(\hL_0^\frac{a}{r}\right)_+\d_z\hL_0+\d_z\left(\hL_0^\frac{a}{r}\right)_+\d_x\hL_0\right)\right|_{z=(\phi_0)_x}.
\end{align*}
By formulas~\eqref{eq:open GD in genus 1} and~\eqref{eq:open GD in genus 0}, this is equivalent to the equation
\begin{align*}
&\underline{(\phi_1)_x\d_z\left(\hL_0^{\frac{a+r}{r}}\right)_+}+\underline{\underline{\frac{(\phi_0)_{xx}}{2}\d^2_z\left(\hL_0^{\frac{a+r}{r}}\right)_+}}+\boxed{\left(\widehat{L_0^{\frac{a+r}{r}}}\right)_{+,1}}+\frac{a+r}{r}\left(\hL_0^{\frac{a}{r}}\hL_1\right)_+=\\
=&\sum_{b=1}^{r-1}\frac{a+r}{b(r-b)}\frac{\d^2\mcF^{\frac{1}{r},c}_0}{\d T_a\d T_b}\left[\underline{(\phi_1)_x\d_z\left(\hL_0^{\frac{r-b}{r}}\right)_+}+\underline{\underline{\frac{(\phi_0)_{xx}}{2}\d^2_z\left(\hL_0^{\frac{r-b}{r}}\right)_+}}+\boxed{\left(\widehat{L_0^{\frac{r-b}{r}}}\right)_{+,1}}+\frac{r-b}{r}\left(\hL_0^{-\frac{b}{r}}\hL_1\right)_+\right]\\
&+\frac{a+r}{r}\hL_0^{\frac{a}{r}}\left[\underline{(\phi_1)_x\d_z\hL_0}+\underline{\underline{\frac{(\phi_0)_{xx}}{2}\d^2_z\hL_0}}+\hL_1\right]\\
&+\frac{a+r}{2r}\left(\underline{\underline{(\phi_0)_{xx}\d_z\left(\hL_0^\frac{a}{r}\right)_+\d_z\hL_0}}+\boxed{\d_z\left(\hL_0^\frac{a}{r}\right)_+\d_x\hL_0}\right)
\end{align*}
where we should substitute $z=(\phi_0)_x$. Collecting together the terms marked in the same way, we see that this equation is a consequence of the following four equations:
\begin{align}
&\d_z\left(\hL_0^\frac{a+r}{r}\right)_+=\sum_{b=1}^{r-1}\frac{a+r}{b(r-b)}\frac{\d^2\mcF^{\frac{1}{r},c}_0}{\d T_a\d T_b}\d_z\left(\hL_0^\frac{r-b}{r}\right)_++\frac{a+r}{r}\left(\hL_0^{\frac{a}{r}}\right)_+\d_z\hL_0,\label{eq:open TRR-1,first equation}\\
&\d^2_z\left(\hL_0^\frac{a+r}{r}\right)_+=\sum_{b=1}^{r-1}\frac{a+r}{b(r-b)}\frac{\d^2\mcF^{\frac{1}{r},c}_0}{\d T_a\d T_b}\d^2_z\left(\hL_0^\frac{r-b}{r}\right)_++\frac{a+r}{r}\left(\hL_0^\frac{a}{r}\right)_+\d^2_z\hL_0+\label{eq:open TRR-1,second equation}\\
&\hspace{2.5cm}+\frac{a+r}{r}\d_z\left(\hL_0^\frac{a}{r}\right)_+\d_z\hL_0,\notag\\
&\left(\widehat{L^\frac{a+r}{r}_0}\right)_{+,1}=\sum_{b=1}^{r-1}\frac{a+r}{b(r-b)}\frac{\d^2\mcF^{\frac{1}{r},c}_0}{\d T_a\d T_b}\left(\widehat{L^\frac{r-b}{r}_0}\right)_{+,1}+\frac{a+r}{2r}\d_z\left(\hL^\frac{a}{r}_0\right)_+\d_x\hL_0,\label{eq:open TRR-1,third equation}\\
&\frac{a+r}{r}\left(\hL_0^{\frac{a}{r}}\hL_1\right)_+=\sum_{b=1}^{r-1}\frac{a+r}{b(r-b)}\frac{\d^2\mcF^{\frac{1}{r},c}_0}{\d T_a\d T_b}\frac{r-b}{r}\left(\hL_0^{-\frac{b}{r}}\hL_1\right)_++\frac{a+r}{r}\left(\hL_0^{\frac{a}{r}}\right)_+\hL_1.\label{eq:open TRR-1,fourth equation}
\end{align}

\medskip

In~\cite[equation~(4.19)]{BCT_Closed_Extended}, we proved that
\begin{gather*}
d\left(\hL_0^\frac{a+r}{r}\right)_+=\sum_{b=1}^{r-1}\frac{a+r}{b(r-b)}\frac{\d^2\mcF^{\frac{1}{r},c}_0}{\d T_a\d T_b}d\left(\hL_0^\frac{r-b}{r}\right)_++\frac{a+r}{r}\left(\hL_0^{\frac{a}{r}}\right)_+d\hL_0,
\end{gather*}
from which equation~\eqref{eq:open TRR-1,first equation} clearly follows.  Applying the derivative $\d_z$ to both sides of~\eqref{eq:open TRR-1,first equation}, we get equation~\eqref{eq:open TRR-1,second equation}. Equation~\eqref{eq:open TRR-1,fourth equation} follows from the property~\cite[equation~(4.22)]{BCT_Closed_Extended}
\begin{gather}\label{eq:second important identity}
\left(\hL_0^{\frac{a}{r}}\right)_--\sum_{b=1}^{r-1}\frac{1}{b}\frac{\d^2\mcF^{\frac{1}{r},c}_0}{\d T_a\d T_b}\hL_0^{-\frac{b}{r}}\in z^{-r-1}\mbC[f^{[0]}_*][[z^{-1}]].
\end{gather}

\medskip

It remains to prove equation~\eqref{eq:open TRR-1,third equation}. Let us first prove that for any $a\ge 1$ we have
\begin{gather}\label{eq:degree 1 part of a power of L}
\left(\widehat{L^{\frac{a}{r}}_0}\right)_{1}=\frac{a(a-r)}{2r^2}\hL_0^{\frac{a}{r}-2}\d_z\hL_0\d_x\hL_0.
\end{gather}
It is easy to see that
\begin{align*}
\left(\widehat{\left(L_0^\frac{1}{r}\right)^r}\right)_{1}=&r\hL_0^\frac{r-1}{r}\left(\widehat{L_0^{\frac{1}{r}}}\right)_{1}+\sum_{j=0}^{r-1}\hL_0^\frac{r-j-1}{r}\d_z\hL_0^\frac{1}{r}\d_x\hL_0^\frac{j}{r}=\\
=&r\hL_0^\frac{r-1}{r}\left(\widehat{L_0^{\frac{1}{r}}}\right)_{1}+\sum_{j=0}^{r-1}\frac{j}{r^2}\hL_0^{-1}\d_z\hL_0\d_x\hL_0=\\
=&r\hL_0^\frac{r-1}{r}\left(\widehat{L_0^{\frac{1}{r}}}\right)_{1}+\frac{r-1}{2r}\hL_0^{-1}\d_z\hL_0\d_x\hL_0.
\end{align*}
Since we obviously have $\left(\widehat{\left(L^\frac{1}{r}\right)^r}\right)_{1}=0$, equation~\eqref{eq:degree 1 part of a power of L} is proved for $a=1$. For an arbitrary $a\ge 1$, we compute
\begin{align*}
\left(\widehat{\left(L_0^\frac{1}{r}\right)^a}\right)_{1}=&a\hL_0^\frac{a-1}{r}\left(\widehat{L_0^{\frac{1}{r}}}\right)_{1}+\sum_{j=0}^{a-1}\hL_0^\frac{a-1-j}{r}\d_z\hL_0^\frac{1}{r}\d_x\hL_0^\frac{j}{r}=\\
=&\frac{a(1-r)}{2r^2}\hL_0^{\frac{a}{r}-2}\d_z\hL_0\d_x\hL_0+\sum_{j=0}^{a-1}\frac{j}{r^2}\hL_0^{\frac{a}{r}-2}\d_z\hL_0\d_x\hL_0=\\
=&\frac{a(a-r)}{2r^2}\hL_0^{\frac{a}{r}-2}\d_z\hL_0\d_x\hL_0.
\end{align*}
Thus, equation~\eqref{eq:degree 1 part of a power of L} is proved.

\medskip

Let us rewrite formula~\eqref{eq:degree 1 part of a power of L} in the following way: $\left(\widehat{L_0^{\frac{a}{r}}}\right)_{1}=\frac{a}{2r}\d_z\hL_0^{\frac{a}{r}-1}\d_x\hL_0$. Then we see that equation~\eqref{eq:open TRR-1,third equation} is equivalent to
\begin{gather}\label{eq:proof of equation 3 for open TRR-1}
\frac{a+r}{2r}\left(\d_z\hL_0^\frac{a}{r}\d_x\hL_0\right)_+=\sum_{b=1}^{r-1}\frac{a+r}{2rb}\frac{\d^2\mcF^{\frac{1}{r},c}_0}{\d T_a\d T_b}\left(\d_z\hL_0^{-\frac{b}{r}}\d_x\hL_0\right)_++\underline{\frac{a+r}{2r}\d_z\left(\hL_0^\frac{a}{r}\right)_+\d_x\hL_0}.
\end{gather}
We obviously have
$$
\frac{a+r}{2r}\left(\d_z\hL_0^\frac{a}{r}\d_x\hL_0\right)_+=\underline{\frac{a+r}{2r}\d_z\left(\hL_0^\frac{a}{r}\right)_+\d_x\hL_0}+\frac{a+r}{2r}\left(\d_z\left(\hL_0^\frac{a}{r}\right)_-\d_x\hL_0\right)_+.
$$
Note that the underlined term here cancels the underlined term on the right-hand side of~\eqref{eq:proof of equation 3 for open TRR-1}. Therefore, equation~\eqref{eq:proof of equation 3 for open TRR-1} is equivalent to the identity
\begin{gather*}
\left(\d_z\left(\hL_0^\frac{a}{r}\right)_-\d_x\hL_0\right)_+=\sum_{b=1}^{r-1}\frac{1}{b}\frac{\d^2\mcF^{\frac{1}{r},c}_0}{\d T_a\d T_b}\left(\d_z\hL_0^{-\frac{b}{r}}\d_x\hL_0\right)_+,
\end{gather*}
which follows from equation~\eqref{eq:second important identity}. The theorem is proved.
\end{proof}

\medskip

\bibliographystyle{abbrv}
\bibliography{OpenBiblio}

\end{document}